\newtheorem{theorem}{Theorem}
\newtheorem{lemma}[theorem]{Lemma}
\title{Random cluster dynamics for the Ising model is rapidly mixing}\thanks{This work was partially 
supported by the EPSRC grant EP/N004221/1. 
This work was done (in part) while the authors were visiting the Simons Institute for the Theory of Computing.}
\author{Heng Guo \and Mark Jerrum}
\address{School of Mathematical Sciences\\
Queen Mary, University of London, Mile End Road, London E1 4NS, United Kingdom.}
\let\rho=\varrho
\def\Pr{\mathop{\textrm{Pr}}\nolimits}
\def\epsilon{\varepsilon}
\newcommand{\trans}[4]{\ensuremath{\left[\begin{smallmatrix} #1 & #2 \\ #3 & #4 \end{smallmatrix}\right]}}
\begin{document}

\begin{abstract}
We show that the mixing time of Glauber (single edge update) dynamics for the
random cluster model at $q=2$ is bounded by a polynomial in the size of the underlying graph.
As a consequence, the Swendsen-Wang algorithm for the ferromagnetic Ising model at any temperature has the same polynomial mixing time bound.
\end{abstract}

\maketitle

\section{Introduction}
The Ising model is perhaps the best known model in statistical physics, 
and it has also been widely studied from an algorithmic perspective.  
An instance of the 
model is an undirected graph~$G$, together with a parameter $\beta>0$. 
A {\it configuration\/} of the model is an assignment $\sigma\in\{0,1\}^V$ of ``spins'' 
to the vertices of~$G$.  The {\it weight\/} $w(\sigma)$
of configuration~$\sigma$
is $\beta^{m(\sigma)}$ where $m(\sigma)$ is the number of monochromatic edges
(edges $\{i,j\}$ with $\sigma(i)=\sigma(j)$) in~$G$.  It is of importance to compute
the partition function of the system, which is the sum of weights $w(\sigma)$ 
over all configurations $\sigma\in\{0,1\}^V$.  

If $\beta<1$ then the system is antiferromagnetic, and the partition
function is computationally hard, even to approximate.  However, when $\beta>1$
the system is ferromagnetic, and the partition function can be approximated
in polynomial time to with any specified relative error~\cite{JSising}.  A direct approach
using Markov chain Monte Carlo (MCMC) on the spin configurations described above fails, 
as the spin model exhibits a phase transition for sufficiently large~$\beta$.  However,
there is an equivalent formulation of the Ising model in terms of ``even subgraphs''
which does form the basis for a successful application of MCMC, as was shown by 
Jerrum and Sinclair~\cite{JSising}.  (See Sections \ref{sec:IsingRC} and~\ref{sec:evensubgraph}
for details of the various models referred to in this introduction.)   

There is a third model which is equivalent to the Ising model in the sense of having the
same partition function up to an easily computable factor, namely the random cluster 
model introduced by Fortuin and Kasteleyn \cite{FortuinKasteleyn}.  In common with the even subgraphs model,
the configurations of the random cluster model are subsets of the edge set of~$G$.
However, the random cluster model is more tightly related to the Ising model;  in fact
a random Ising configuration can be obtained by colouring the connected components (clusters)
of a random cluster configuration independently and uniformly at random by $0$ and~$1$. 
Although we already have a polynomial-time algorithm for estimating the partition function
of the Ising model, it is natural to wonder about the mixing time of the Gibbs sampler
for random cluster configurations, which makes single edge-flip moves with Metropolis 
rejection probabilities.  For one thing, this dynamics may 
potentially mix faster than the standard dynamics for the even subgraphs model,
and the same is true with even greater force for the closely related Swendsen-Wang algorithm. 

Another reason for focusing on the random cluster model is that it extends the other 
two models in the following sense.
There is a generalisation of the Ising model 
to $q\geq2$ spins, known as the $q$-state Potts model, of which the Ising model
is the special case $q=2$.  Although the even subgraphs and spin formulations are defined
only for integer~$q$, the random cluster model makes sense for arbitrary positive real~$q$. 
Thus, by studying the dynamics of the the random cluster model at $q=2$, we may gain 
insight into the complexity of computing the partition function of the random cluster 
model at other values of $q$, particularly (for reasons that will be explained presently)
in the range $0\leq q<2$.  Stated in other terms, we would hope to gain information
about the complexity of approximating the Tutte polynomial $T(G;x,y)$ 
in the region $0\leq(x-1)(y-1)<2$,
and $x,y\geq1$, about which nothing is currently known except for the point $x=y=1$
and the (trivial) hyperbola $(x-1)(y-1)=1$ \cite{Tutte1,Tutte2}. 

In this paper we prove for the first time that the Gibbs sampler (single edge-flip 
dynamics) for the random cluster model on an 
arbitrary graph mixes in time polynomial in $n=|V(G)|$, the number of vertices
of~$G$. (See Theorem \ref{thm:RC:fast}.)  
One main tool is the well known canonical paths technique for bounding mixing
time via a parameter known as congestion (in the form presented by Sinclair~\cite{Sin92}, 
building on work of Diaconis and Stroock~\cite{DS93}).
Another tool is a coupling between random cluster and even subgraph configurations
discovered by Grimmett and Janson~\cite{GrimmettJanson}.  The existence of this 
coupling invites us to bound the 
congestion of the edge-flip dynamics on random cluster configurations in terms 
of the known bounds on congestion for the edge-flip dynamics on (augmented)
even subgraph
configurations, established by Jerrum and Sinclair~\cite{JSising}.  Unfortunately, this
translation between the models cannot be handled by existing comparison techniques \cite{DS93,comparison},
and an extension of comparison methods to the current situation is a contribution 
of the paper, and one that may find application elsewhere.

The Swendsen-Wang algorithm \cite{SwendsenWang} is widely considered to be
an efficient method for sampling random cluster configurations (and 
Ising spin configurations) in practice.  Ullrich has shown~\cite{Ullrich:Comparison} 
that the Swendsen-Wang 
dynamics mixes at least as fast as the edge-flip dynamics, so our result provides 
the first polynomial upper bound on mixing time of the Swendsen-Wang algorithm.
This provides a partial answer to a problem that has been open since around 
1990, when the Ising model was first studied from a complexity-theoretic perspective.
The answer is partial in the sense that the exponent in the bound we derive
here is likely to be well above the true answer, and is certainly too high
to be of practical interest. 
Hopefully, the result presented here may be the first step on the road to 
a practically useful upper bound on the Swendsen-Wang dynamics.  

Since the random cluster model is defined for all positive real~$q$, it is 
natural to speculate on the mixing time of the Glauber dynamics when $q\not=2$.
For $q>2$, the mixing time cannot be polynomial in general, owing to a first-order
phase transition of the model on the complete graph (the ``mean-field'' situation)
identified by Bollob\'as, Grimmett and Janson~\cite{BollobasGrimmettJanson}.
This phase transition is a barrier to rapid mixing when $q>2$, as shown 
by Gore and Jerrum when $q$ is an integer~\cite{GoreJerrum}, 
and by Blanca and Sinclair~\cite{BlancaSinclair:Meanfield}
for general $q>2$.  For $q$ sufficiently large, Borgs et al.~\cite{torpid}
prove exponential time mixing even in the physically important case of the two-dimensional
lattice.  In fact, there is no polynomial-time algorithm of any sort for 
evaluating the partition function of the random cluster model on general graphs
when $q>2$, unless there is an FPRAS for counting independent sets
in a bipartite graph~\cite{GoldbergJerrum:Potts}.  In contrast,
in the range $0\leq q\leq2$ there
is no known barrier to rapid mixing, and there is cause to be optimistic, particularly 
in the range $1<q<2$, in which the random cluster model is monotonic.   

In this version of the paper, we do not try too hard to optimise the exponent 
in the mixing-time bound, as the method we employ is unlikely to
get close to the true answer.

\section{Ising and Random Cluster model}\label{sec:IsingRC}
The ferromagnetic Ising model on a graph $G=(V,E)$ with parameter $\beta>1$ is defined by the following: 
for any $\sigma\in\{0,1\}^V$, the probability of being in configuration~$\sigma$ is
\begin{align}
  \pi(\sigma)=\frac{\beta^{m(\sigma)}}{Z_{Ising}(\beta)},
  \label{eqn:Ising}
\end{align}
where $m(\sigma)$ is the number of mono-chromatic edges in $\sigma$, and its normalizing factor, the so-called partition function, is defined as
\begin{align*}
  Z_{Ising}(\beta)=\sum_{\sigma\in\{0,1\}^V}\beta^{m(\sigma)}.
\end{align*}
The random cluster model with parameters $(p,q)$ is defined on subsets of edges $S\subseteq E$ such that
\begin{align}
  \pi_{RC}(S) \propto p^{|S|}(1-p)^{|E\backslash S|}q^{\kappa(S)},
  \label{eqn:RC}
\end{align}
and its partition function is
\begin{align*}
  Z_{RC}(p,q)=\sum_{S\subseteq E} p^{|S|}(1-p)^{|E\backslash S|}q^{\kappa(S)}.
\end{align*}
Denote this measure by $\pi_{RC;p,q}(\cdot)$ or simply $\pi_{RC}(\cdot)$ when there is no confusion.
We use $\Omega$ throughout this article to denote the state space of random cluster models, namely $\{0,1\}^{E}$.
It is well known that, for $q=2$ and $p=1-\frac{1}{\beta}$, the random cluster model is equivalent to the Ising model
in the sense that their partition functions are equal up to some easily computable factor (see \eqref{eqn:equivalence}).
The random cluster model was introduced by Fortuin and Kastelyn~\cite{FortuinKasteleyn}, who also described its
relationship to the Ising model.  The connection between the two models was further elucidated 
by Edwards and Sokal~\cite{EdwardsSokal}.

The (lazy) single bond flip dynamics $P_{RC}$ is defined as follows based on the Metropolis filter.
\begin{align}
  P_{RC}(x,y)=
  \begin{cases}
    \frac{1}{2m} \min\left\{1,\frac{\pi_{RC}(y)}{\pi_{RC}(x)}\right\} & \textrm{if } |x\oplus y| = 1 ;\\
    1-\frac{1}{2m}\sum_{e\in E} \min\left\{1,\frac{\pi_{RC}(x\oplus \{e\})}{\pi_{RC}(x)}\right\}& \textrm{if } x=y ;\\
    0 & \textrm{otherwise,}\\
  \end{cases}
  \label{eqn:SB:RC}
\end{align}
where $x,y\in\Omega$.
It is not hard to see, for example, by checking the detailed balance condition, 
that $\pi_{RC}(\cdot)$ is the stationary distribution of $P_{RC}$.
Note that the Markov chain is {\it lazy}, i.e., it remains 
at its current state with probability at least~$\frac12$.  This eliminates the possibility of the 
transition matrix having $P$ having negative eigenvalues, and simplifies the analysis later.

For a Markov chain with transition matrix $P$ and stationary distribution $\pi$, we are interested in its 
{\it mixing time}, that is, how fast it converges to the stationary distribution, defined as follows:
\begin{align}
  \tau_{\epsilon}(P):=\min\left\{t:\max_{x\in\Omega}||P^t(x,\cdot)-\pi||\le \epsilon\right\},
  \label{eqn:mixingtime}
\end{align}
where $||\cdot||$ is the total variation distance, namely
\begin{align*}
  ||\pi-\pi'||=\frac{1}{2}\sum_{x\in\Omega}|\pi(x)-\pi'(x)|.
\end{align*}

Canonical paths are a useful technique to bound the mixing time of Markov chains,
introduced by Jerrum and Sinclair \cite{SJ,JSperm}.
Let $\Gamma=\{\gamma_{xy}:x,y\in\Omega\}$ be a collection of paths, 
where $\gamma_{xy}$ is a ``canonical'' path from $x$ to $y$ using transitions of the Markov chain.
The {\it congestion\/} $\rho(\Gamma)$ associated with these paths is 
\begin{align}
  \rho(\Gamma):=\max_{(z,z')\in\Omega^2, P(z,z')>0}\,\frac{L}{\pi(z)P(z,z')}\,\sum_{\substack{x,y\in\Omega^2\\ \gamma_{xy}\ni(z,z')}}\pi(x)\pi(y),
  \label{eqn:congestion}
\end{align}
where $L=L(\Gamma)$ denotes the maximum length of paths in $\Gamma$. 

A more general technique is provided by the flow formulation for congestion.
A flow $\Gamma$ is a collection of paths, and each path $\gamma\in\Gamma$ is assigned a weight $wt(\gamma)$, such that
\begin{align}
  \sum_{\gamma \textrm{ is from } x \textrm{ to } y} wt(\gamma)=\pi(x)\pi(y).
  \label{eqn:flow:requirement}
\end{align}
The congestion of $\Gamma$ is defined as
\begin{align}
  \rho(\Gamma):=\max_{(z,z')\in\Omega^2, P(z,z')>0}\,\frac{L}{\pi(z)P(z,z')}\,\sum_{\gamma\in\Gamma,\, (z,z')\in\gamma}wt(\gamma).
  \label{eqn:congestion:flow}
\end{align}
The canonical paths are just a flow where for each pair $(x,y)$ there is only one path with positive weight.

It is standard that the mixing time of a Markov chain $P$ can be bounded by the congestion of 
any flow~$\Gamma$~\cite{Sin92}.
\begin{theorem}
  For a lazy, ergodic, reversible Markov chain $P$ and any initial state $x_0\in\Omega$,
  \begin{align*}
    \tau_{\varepsilon}(P)\le\rho(\Gamma)(\ln\pi(x_0)^{-1}+\ln\varepsilon^{-1}).
  \end{align*}  
  \label{thm:mixingtime:congestion}\vspace{-3\belowdisplayskip}
\end{theorem}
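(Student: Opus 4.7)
The plan is to go through the spectral gap of $P$ and invoke the variational characterization of the gap via the Dirichlet form. For a reversible Markov chain with transition matrix $P$ and stationary distribution $\pi$, define
\begin{align*}
  \mathcal{E}(f,f) = \tfrac12 \sum_{z,z' \in \Omega} \pi(z) P(z,z') (f(z)-f(z'))^2,
  \qquad \mathrm{Var}_\pi(f) = \tfrac12 \sum_{x,y \in \Omega} \pi(x)\pi(y)(f(x)-f(y))^2.
\end{align*}
The spectral gap satisfies $\mathrm{Gap}(P) = \min_{f \not\equiv \text{const}} \mathcal{E}(f,f)/\mathrm{Var}_\pi(f)$. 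My target is to establish $\mathrm{Gap}(P) \ge 1/\rho(\Gamma)$ for any flow $\Gamma$ satisfying \eqref{eqn:flow:requirement}, and then translate this into the claimed mixing-time bound using a standard gap-to-mixing comparison.

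The first main step is the Poincar\'e-type inequality $\mathrm{Var}_\pi(f) \le \rho(\Gamma)\,\mathcal{E}(f,f)$. I would fix $f$ and, for each pair $(x,y)$, telescope $f(x)-f(y) = \sum_{(z,z') \in \gamma} (f(z)-f(z'))$ along each path $\gamma$ carrying $wt(\gamma) > 0$ from $x$ to $y$. Squaring and applying Cauchy--Schwarz (using the length $L \ge |\gamma|$ as the number of terms) gives
\begin{align*}
  (f(x)-f(y))^2 \le L \sum_{(z,z') \in \gamma}(f(z)-f(z'))^2.
\end{align*}
Multiplying by $wt(\gamma)$, summing over $\gamma$ from $x$ to $y$, and using \eqref{eqn:flow:requirement} on the left-hand side, then summing over $(x,y)$, yields
\begin{align*}
  2\,\mathrm{Var}_\pi(f)
   \le L \sum_{(z,z')} (f(z)-f(z'))^2 \sum_{\gamma \ni (z,z')} wt(\gamma).
\end{align*}
Now I would multiply and divide the inner sum by $\pi(z)P(z,z')$, pull the maximum out as $\rho(\Gamma)$ from \eqref{eqn:congestion:flow}, and recognize what remains as $2\,\mathcal{E}(f,f)$, giving $\mathrm{Var}_\pi(f) \le \rho(\Gamma)\,\mathcal{E}(f,f)$ and hence $\mathrm{Gap}(P) \ge 1/\rho(\Gamma)$.

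The second step converts the gap bound into a mixing-time bound. Since $P$ is lazy, all eigenvalues lie in $[0,1]$, so the second-largest eigenvalue modulus equals $1-\mathrm{Gap}(P)$ and no absolute-value issues from negative eigenvalues arise. Standard spectral arguments (writing $P^t(x_0,\cdot)/\pi(\cdot) - 1$ in the orthonormal eigenbasis of $P$ with respect to the $\pi$-inner product, and bounding total variation by $\tfrac12\|P^t(x_0,\cdot)/\pi-1\|_{2,\pi}$) give
\begin{align*}
  \|P^t(x_0,\cdot)-\pi\| \le \tfrac12\,\pi(x_0)^{-1/2}(1-\mathrm{Gap}(P))^t \le \tfrac12\,\pi(x_0)^{-1/2} \exp(-t/\rho(\Gamma)).
\end{align*}
Requiring the right-hand side to be at most $\varepsilon$ yields $t \le \rho(\Gamma)(\tfrac12\ln\pi(x_0)^{-1} + \ln\tfrac1{2\varepsilon}) \le \rho(\Gamma)(\ln\pi(x_0)^{-1}+\ln\varepsilon^{-1})$, as required.

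The only genuinely nontrivial step is the Cauchy--Schwarz manipulation in the Poincar\'e inequality; the novelty over the canonical paths version is the bookkeeping with the weights $wt(\gamma)$, but since each path still contributes a single term $wt(\gamma)(f(x)-f(y))^2$ to the variance sum and the identity \eqref{eqn:flow:requirement} exactly replaces $\pi(x)\pi(y)$, the argument carries through verbatim. The spectral-gap-to-mixing-time conversion is entirely routine for lazy reversible chains and contributes no real obstacle.
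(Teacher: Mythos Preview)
Your proposal is correct and follows exactly the standard route of Sinclair~\cite{Sin92}: a Poincar\'e inequality via Cauchy--Schwarz along flow paths to get $\mathrm{Gap}(P)\ge 1/\rho(\Gamma)$, followed by the usual $L^2$ spectral bound on total variation for a lazy reversible chain. The paper itself does not supply a proof of this theorem; it simply quotes the result from~\cite{Sin92}, so your write-up is in fact a faithful reconstruction of the argument the paper is citing rather than a departure from it.
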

Our goal is to bound $\tau_{\varepsilon}(P_{RC})$.
We can choose the initial state to be the empty set of edegs, 
which has weight $\pi(\emptyset)=\frac{(1-p)^{|E|}2^{|V|}}{Z_{RC}}$.
Also for $\beta=\frac{1}{1-p}$ we have
$Z_{RC}(p,2)= \beta^{-|E|}Z_{Ising}(\beta) \le 2^{|V|} $, and therefore $\pi(\emptyset)\ge (1-p)^{|E|}$.
Hence, $\ln\pi(x_0)^{-1}\le m\ln(1-p)^{-1}$.
The main task is to design a good flow $\Gamma_{RC}$ so that $\rho(\Gamma_{RC})$ is bounded by a polynomial.

\section{Random Even Subgraphs} \label{sec:evensubgraph}

There is yet another formalism of the Ising model, 
that is, the so-called ``high-temperature expansion'' or even subgraphs model.
We still pick a subset of edges $S\subseteq E$ but with the further restriction 
that every vertex in the induced subgraph $(V,S)$ has even degree.
Denote by $\Omega_{even}(G)$ the state space of all such even subgraphs of $G$.
We usually simply write $\Omega_{even}$ when there is no confusion.
In this even subgraphs model we want to sample from $\Omega_{even}$ with parameter $p\le 1/2$,
so that edges are more inclined to be ``out'' than ``in''.
That is, for any $S\in\Omega_{even}$,
\begin{align}
  \pi(S)\propto p^{|S|}(1-p)^{|E\backslash S|}
  \label{eqn:even}
\end{align}
and
\begin{align*}
  Z_{even}(p)=\sum_{S\in\Omega_{even}} p^{|S|}(1-p)^{|E\backslash S|}.
\end{align*}
Distributions \eqref{eqn:Ising}, \eqref{eqn:RC}, and \eqref{eqn:even} have in fact the same partition function,
up to certain scaling factors:
\begin{align}
  Z_{Ising}(\beta)=\beta^{|E|}Z_{RC}\left(1-\frac{1}{\beta},2\right)=2^{|V|}\beta^{|E|}Z_{even}\left(\frac{1}{2}\left(1-\frac{1}{\beta}\right)\right).
  \label{eqn:equivalence}
\end{align}
The first equivalence is well-known, cf.\ \cite{Grimmett:book}.
The second one is also a classical result, cf.\ \cite{Wae41}.
More detailed explanations can be found in Appendix~\ref{sec:equivalence}.

Grimmett and Janson \cite[Thm~3.5]{GrimmettJanson} discovered the following coupling between even subgraphs and random cluster configurations.
Take a random even subgraph $S$ from distribution \eqref{eqn:even} with parameter $p\le 1/2$.
Then we add each edge $e\not\in S$ independently with probability $\frac{p}{1-p}$ to get $R$.

\begin{theorem}{\cite[Thm~3.5]{GrimmettJanson}}
  The subgraph $R$ is a random cluster configuration, that is, it satisfies \eqref{eqn:RC} with parameters $(2p,2)$.
  \label{thm:GJ:coupling}
\end{theorem}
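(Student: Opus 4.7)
The plan is to compute $\Pr(R = T)$ directly for each $T \subseteq E$ and verify that it is proportional to $(2p)^{|T|}(1-2p)^{|E\setminus T|} 2^{\kappa(T)}$. By construction, $R = S \sqcup A$ where $S$ is even and $A \subseteq E\setminus S$ is the set of added edges, so the event $\{R=T\}$ decomposes over the choice of the even subgraph $S$: we need $S \subseteq T$ with $S$ even, the edges of $T\setminus S$ added (each with probability $\tfrac{p}{1-p}$), and the edges of $E\setminus T$ not added (each with probability $1-\tfrac{p}{1-p}=\tfrac{1-2p}{1-p}$).

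First I would write
\begin{align*}
\Pr(R=T)=\sum_{\substack{S\subseteq T\\ S\text{ even}}}\frac{p^{|S|}(1-p)^{|E\setminus S|}}{Z_{even}(p)}\left(\frac{p}{1-p}\right)^{|T\setminus S|}\left(\frac{1-2p}{1-p}\right)^{|E\setminus T|}.
\end{align*}
Then I would use $|E\setminus S|=|E\setminus T|+|T\setminus S|$ to cancel the factors of $(1-p)$: the factor $(1-p)^{|T\setminus S|}$ in the numerator cancels the denominator of $(p/(1-p))^{|T\setminus S|}$, and $(1-p)^{|E\setminus T|}$ cancels the denominator of $((1-2p)/(1-p))^{|E\setminus T|}$. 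Combining the resulting $p^{|S|}\cdot p^{|T\setminus S|}=p^{|T|}$, the dependence on $S$ disappears from the summand, giving
\begin{align*}
\Pr(R=T)=\frac{p^{|T|}(1-2p)^{|E\setminus T|}}{Z_{even}(p)}\cdot N(T),
\end{align*}
where $N(T)$ is the number of even subgraphs of the graph $(V,T)$.

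The key combinatorial fact is that $N(T)=2^{|T|-|V|+\kappa(T)}$, which follows because the even subgraphs of $(V,T)$ form the cycle space over $\mathbb{F}_2$, i.e.\ the kernel of the boundary map $\mathbb{F}_2^{T}\to\mathbb{F}_2^{V}$, whose dimension equals $|T|-\mathrm{rank}=|T|-(|V|-\kappa(T))$. Substituting,
\begin{align*}
\Pr(R=T)=\frac{2^{-|V|}}{Z_{even}(p)}\,(2p)^{|T|}(1-2p)^{|E\setminus T|}2^{\kappa(T)},
\end{align*}
which matches \eqref{eqn:RC} with parameters $(2p,2)$. Since both sides are probability distributions on $\Omega=\{0,1\}^E$, equality of normalising constants is automatic, and one reads off the identity $Z_{RC}(2p,2) = 2^{|V|}(1-2p)^{-|E|}\cdot\text{(stuff)}\cdot Z_{even}(p)$ as a bonus, compatible with \eqref{eqn:equivalence}.

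No real obstacle is expected; the only nontrivial ingredient is the cycle-space dimension formula, which is standard. The hypothesis $p\le 1/2$ is used implicitly so that $p/(1-p)\le 1$ is a genuine probability and $1-2p\ge 0$, so the resulting weights are nonnegative.
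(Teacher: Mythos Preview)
Your proof is correct and follows essentially the same route as the paper's: sum over even $S\subseteq T$, simplify the product to $p^{|T|}(1-2p)^{|E\setminus T|}N(T)$, and invoke the cycle-space count $N(T)=2^{|T|-|V|+\kappa(T)}$. The only cosmetic difference is that you track the normalising constant $Z_{even}(p)$ explicitly while the paper works up to proportionality.
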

For completeness we give a proof of Theorem \ref{thm:GJ:coupling}.
\begin{proof}
  The number of even subgraphs of a (not necessarily simple) graph $G=(V,E)$ is well known to be
  \begin{align}
    |\Omega_{even}(G)|=2^{|E|-|V|+\kappa(G)}
    \label{eqn:even:number}
  \end{align}
  where $\kappa(G)$ is the number of connected components of $G$.
  
  For each $r\subseteq E$,
  \begin{align*}
    \Pr(R=r) & \propto \sum_{s\subseteq r, s \textrm{ even}} 
    \left( \frac{p}{1-p} \right)^{|s|}
    \left( \frac{p}{1-p} \right)^{|r\backslash s|}
    \left( \frac{1-2p}{1-p} \right)^{|E\backslash r|}\\
    & \propto p^{|r|}(1-2p)^{|E\backslash r|}N(r),
  \end{align*}
  where $N(r)$ is the number of even subgraphs of $(V,r)$.
  By \eqref{eqn:even:number}, $N(r)=2^{|r|-|V|+\kappa(r)}$.
  Hence,
  \begin{align*}
    \Pr(R=r) & \propto (2p)^{|r|}(1-2p)^{|E\backslash r|}2^{\kappa(r)}.\qedhere
  \end{align*}
\end{proof}

However, it is not clear how to sample from $\Omega_{even}$ with edge weights directly in an efficient way,
partly because of the rigid structure of the all even requirement.
On the other hand, Jerrum and Sinclair \cite{JSising} designed a Markov chain to do so by moving among all subgraphs, 
but with each odd degree vertex incurring a penalty.
Note that the Jerrum-Sinclair Markov chain together with the Grimmett-Janson coupling (Theorem \ref{thm:GJ:coupling})
yields an efficient sampler for random cluster models and Ising configurations.
It is more straightforward and efficient than the one given by Randall and Wilson~\cite{RW99},
which also uses the Jerrum-Sinclair chain.

A slightly simpler Markov chain is to move between even subgraphs and near-even subgraphs, 
for which we allow exactly two odd degree vertices (or ``holes'').
This is the so-called ``worm'' process, introduced by Prokof'ev and Svistunov \cite{PSworm}.

Let $\Omega_k$ be the collection of subgraphs where $k$ many vertices have odd degrees.
Then $\Omega_0=\Omega_{even}$ and the state space $\Omega_{worm}$ of the ``worm'' process is $\Omega_{worm}:=\Omega_0\cup\Omega_2$.
For each pair of vertices $(u,v)$ such that $u\neq v$, denote by $\Omega(u,v)$
the set of subgraphs of $G$ in which $u$ and $v$ have odd degrees and all other vertices are even.
Then
\begin{align*}
  \Omega_2=\bigcup_{u,v\in V}\Omega(u,v).
\end{align*}
For a subset of edges $S\subseteq E$, let $w_p(S):=p^{|S|}(1-p)^{|E\backslash S|}$.
We give a penalty of $n^{-2}$ to each near-even subgraph:
\begin{align}
  w_{worm}(S):=
  \begin{cases}
    w_p(S) & \textnormal{ if }S\in\Omega_0;\\
    n^{-2}w_p(S) & \textnormal{ if }S\in\Omega_2;\\
    0 & \textnormal{ otherwise.}
  \end{cases} 
  \label{eqn:worm:weight}
\end{align}
The ``worm'' measure is defined as the following:
\begin{align}
  \pi_{worm}(S):=
  \begin{cases}
    \frac{w_{worm}(S)}{Z_{worm}(p)} & \textrm{ if }S\in\Omega_{worm}; \\
    0 & \textrm{ otherwise,}
  \end{cases}
  \label{eqn:worm:measure}
\end{align}
where $Z_{worm}(p)=\sum_{S\in\Omega_{worm}}w_{worm}(S)$.

The winding idea of \cite{JSising} provides a way to design canonical paths between states in $\Omega_{worm}$ with low congestion.
We will not need to analyze it in full detail for the worm process.
Instead, we only care about paths from an even subgraph to another.

\begin{theorem}
  There is a collection of paths 
  \begin{align*}
    \Gamma_{worm}=\{\gamma_{xy}\,|\,x,y\in\Omega_0\}
  \end{align*}
  such that $wt(\gamma_{xy})=\pi_{even}(x)\pi_{even}(y)$,
  and for any $\gamma\in\Gamma_{worm}$ and any state $w\in\gamma$, $w\in\Omega_{worm}$.
  Each state $w$ appears at most once in $\gamma$ and $L(\Gamma_{worm})\le m$.
  Moreover, for any transition $(w,w')$ where $w'=w\oplus\{e\}$,
  \begin{align*}
    \sum_{\gamma\ni(w,w')}wt(\gamma)\le n^4\pi_{worm}(w).
  \end{align*} 
  In the special case $w'=w\cup\{e\}$ for some $e\not\in w$, we have the additional bound
  \begin{align*}
    \sum_{\gamma\ni(w,w')}wt(\gamma)\le n^4\pi_{worm}(w)\frac{p}{1-p}.
  \end{align*}
  \label{thm:path:worm}
\end{theorem}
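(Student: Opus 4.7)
The plan is to construct $\Gamma_{worm}$ by the Jerrum--Sinclair \emph{winding} scheme and then bound its congestion via a direct encoding $(x,y) \mapsto z := x \oplus y \oplus w$, supplemented by the classical two-point correlation estimate for the even subgraph model. Given $x, y \in \Omega_0$, the symmetric difference $\eta := x \oplus y$ is even, so I would decompose it canonically into edge-disjoint simple cycles $C_1, \ldots, C_k$ (ordered, with a canonical starting vertex and direction for each, by some fixed lexicographic rule). The path $\gamma_{xy}$ then processes the cycles in order, flipping the edges of each cycle one after another in the canonical direction. While traversing $C_i$, the only two vertices with odd degree are the canonical start of $C_i$ and the current endpoint of the partially flipped arc, so the state lies in $\Omega_2$; between cycles the state lies in $\Omega_0$. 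Hence every intermediate state lies in $\Omega_{worm}$, each state appears at most once (the flipped edges are distinct members of $\eta$), and the length is at most $|\eta| \le m$.

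For the congestion bound I fix a transition $(w, w')$ with $w' = w \oplus \{e\}$. The encoding $z := x \oplus y \oplus w$ is injective given $(w, w')$: from $(w, w', z)$ one recovers $\eta = z \oplus w$, identifies the current cycle $C_i$ as the unique cycle of the canonical decomposition of $\eta$ containing $e$, reads off one's position within $C_i$ from the canonical direction, thereby determining the split $\eta = A \sqcup B$ into past and future edges, and recovers $x = w \oplus A$ and $y = w \oplus B$. Writing $A = x \oplus w$, $B = w \oplus y$ and expanding $|x \oplus y| = |z \oplus w|$ via inclusion-exclusion gives $|x| + |y| = |w| + |z|$, yielding the weight identity
\begin{align*}
w_p(x)\, w_p(y) \;=\; w_p(w)\, w_p(z).
\end{align*}
Moreover $z$ shares the parity class of $w$ (XOR by an even subgraph preserves every vertex-parity): if $w \in \Omega_0$ then $z \in \Omega_0$, and if $w \in \Omega_2$ has odd vertex set $\{a, b\}$ then $z \in \Omega(a, b)$.

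Combining injectivity with the weight identity gives
\begin{align*}
\sum_{\gamma_{xy} \ni (w, w')} \pi_{even}(x)\, \pi_{even}(y) \;=\; \frac{w_p(w)}{Z_{even}^2} \sum_{z} w_p(z) \;\le\; \frac{w_p(w)\, Z^{\ast}}{Z_{even}^2},
\end{align*}
where $Z^{\ast} \le Z_{even}$ in either case: for $w \in \Omega_0$ simply because $z$ ranges over a subset of $\Omega_0$, and for $w \in \Omega_2$ by the classical two-point correlation estimate $\sum_{z \in \Omega(a, b)} w_p(z) \le Z_{even}$ (equivalent to $\langle \sigma_a \sigma_b \rangle \le 1$ in the ferromagnetic Ising model corresponding to $p \le 1/2$). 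Summing over the $\binom{n}{2}$ vertex pairs also yields the coarse bound $Z_{worm} \le 2 Z_{even}$. Together with $w_{worm}(w) \in \{w_p(w), n^{-2} w_p(w)\}$ and $\pi_{worm}(w) = w_{worm}(w)/Z_{worm}$, the desired bound $n^4 \pi_{worm}(w)$ follows with considerable slack. The stronger bound when $w' = w \cup \{e\}$ is obtained by instead setting $z := x \oplus y \oplus w'$, which replaces $w_p(w)$ by $w_p(w') = \tfrac{p}{1-p}\, w_p(w)$ in the identity above and carries the extra factor $\tfrac{p}{1-p}$ through. The main technical subtleties are really in the bookkeeping: verifying that the canonical cycle decomposition together with $(w, w')$ genuinely pins down the past/future split (so that the encoding is truly injective), and invoking or reproving the correlation estimate $Z_{u,v} \le Z_{even}$, whose proof ultimately rests on ferromagneticity.
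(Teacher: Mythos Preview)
Your proposal is correct and follows the same route as the paper: identical cycle-unwinding paths, the same encoding $z = x \oplus y \oplus w$ with the same injectivity and weight-identity arguments, and the same shift to $w'$ for the refined bound. The only variation is that you exploit the finer observation that $z$ lies in the \emph{same} parity class as $w$ (so the sum over encodings is bounded by $Z_{even}$ directly via $Z_{u,v}\le Z_{even}$), whereas the paper sums more crudely over all of $\Omega_0\cup\Omega_2$ and invokes the aggregate bound $Z_2\le\binom{n}{2}Z_0$; this yields a sharper constant on your side but is otherwise the same argument.
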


Note that $\Gamma_{worm}$ is not a complete collection of canonical paths for $\pi_{worm}(\cdot)$.
The proof of Theorem $\ref{thm:path:worm}$ is an adaptation of \cite{JSising}
and is given in Appendix~\ref{sec:worm}.  Note that Collevecchio et al.~\cite{Collevecchio:worm}
give an analysis of a complete set of canonical paths for the worm process, but their 
result does not quite fit our situation.  

Since paths in $\Gamma_{worm}$ go through $\Omega_{worm}$ instead of $\Omega_{even}$,
we need to extend Theorem \ref{thm:GJ:coupling} to $\Omega_{worm}$.
It will no longer be exact.

Take a random subgraph $S$ from distribution~\eqref{eqn:worm:measure} with parameter $p\le 1/2$.
Again we add each edge $e\not\in S$ independently with probability $\frac{p}{1-p}$ to get~$R$.
Call this measure $\widehat{\pi}(\cdot)$.

\begin{lemma}
  For any $R\subseteq E$,
  \begin{align*}
    \frac{\widehat{\pi}(R)}{\pi_{RC;2p,2}(R)}\le \frac{3}{2}.
  \end{align*}  
  \label{lem:worm:coupling}
\end{lemma}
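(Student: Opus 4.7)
The plan is to mimic the calculation used in the proof of Theorem~\ref{thm:GJ:coupling}, but now accounting for the fact that the ``worm'' starting subgraph $S$ can lie in either $\Omega_0$ or $\Omega_2$, and carries the $n^{-2}$ penalty in the latter case. First I would write
\begin{align*}
  \widehat{\pi}(R)
  = \frac{1}{Z_{worm}}\sum_{\substack{S\subseteq R\\ S\in\Omega_{worm}}}
      w_{worm}(S)\left(\frac{p}{1-p}\right)^{|R\setminus S|}\!\!\left(\frac{1-2p}{1-p}\right)^{|E\setminus R|}\!\!,
\end{align*}
and observe, exactly as in the Grimmett--Janson calculation, that the $S$-dependent factor $p^{|S|}(1-p)^{|E|-|S|}(p/(1-p))^{|R\setminus S|}$ collapses to $p^{|R|}(1-p)^{|E|-|R|}$, and together with the $(1-2p)/(1-p)$ term this telescopes to $p^{|R|}(1-2p)^{|E\setminus R|}$. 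Pulling out the $n^{-2}$ penalty from the $\Omega_2$ terms yields
\begin{align*}
  \widehat{\pi}(R) = \frac{p^{|R|}(1-2p)^{|E\setminus R|}}{Z_{worm}}\bigl[N_0(R)+n^{-2}N_2(R)\bigr],
\end{align*}
where $N_j(R)$ is the number of subgraphs of $(V,R)$ with exactly $j$ odd-degree vertices.

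The next step is to count $N_0$ and $N_2$. As in the proof of Theorem~\ref{thm:GJ:coupling}, $N_0(R)=2^{|R|-|V|+\kappa(R)}$. For $N_2(R)$, I would use that a subgraph contributes an even number of odd-degree vertices within each connected component of $(V,R)$, so the two odd vertices must lie in the same component. Decomposing $(V,R)$ into components of sizes $v_1,\dots,v_k$ and edge counts $e_1,\dots,e_k$ and using the standard fact that, for a connected graph, the subgraphs with any prescribed even-cardinality set of odd vertices form a coset of the cycle space, gives
\begin{align*}
  N_2(R) = N_0(R)\sum_{i=1}^{\kappa(R)}\binom{v_i}{2} \le N_0(R)\binom{n}{2}\le \frac{n^2}{2}N_0(R).
\end{align*}
Hence $N_0(R)+n^{-2}N_2(R)\le \tfrac{3}{2}N_0(R)$.

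Finally I would compare to $\pi_{RC;2p,2}(R)=(2p)^{|R|}(1-2p)^{|E\setminus R|}2^{\kappa(R)}/Z_{RC}(2p,2)$. Plugging in $N_0(R)=2^{|R|-|V|+\kappa(R)}$ shows
\begin{align*}
  \frac{\widehat{\pi}(R)}{\pi_{RC;2p,2}(R)}
  \le \frac{3}{2}\cdot 2^{-|V|}\cdot\frac{Z_{RC}(2p,2)}{Z_{worm}}.
\end{align*}
To close the argument it suffices to show $Z_{worm}\ge 2^{-|V|}Z_{RC}(2p,2)$. This follows from the identity $Z_{RC}(2p,2)=2^{|V|}\sum_R p^{|R|}(1-2p)^{|E\setminus R|}N_0(R)$ obtained by the same substitution, combined with $Z_{worm}=\sum_R p^{|R|}(1-2p)^{|E\setminus R|}[N_0(R)+n^{-2}N_2(R)]$ (which itself follows from $\sum_R\widehat{\pi}(R)=1$, or from a direct computation swapping the order of summation). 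Dropping the nonnegative $N_2$ contribution gives the required inequality, and the ratio $\widehat{\pi}(R)/\pi_{RC;2p,2}(R)\le 3/2$ follows.

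The main obstacle is the combinatorial count of $N_2(R)$; once one realizes that the two odd-degree vertices must share a component and applies the cycle-space count component-wise, the rest is algebraic bookkeeping, and the constant $3/2$ emerges precisely from the choice of the penalty $n^{-2}$ against the crude bound $\binom{n}{2}\le n^2/2$.
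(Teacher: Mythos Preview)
Your proposal is correct and follows essentially the same route as the paper: both compute $\widehat{\pi}(R)$ in terms of the counts $N_0(R)$ and $N_2(R)$, show $N_2(R)=c(R)\,N_0(R)$ with $c(R)=\sum_i\binom{v_i}{2}\le\binom{n}{2}$, and then bound the ratio by $3/2$. The only cosmetic differences are that the paper counts $N_2(R)$ by adding an auxiliary edge $(u,v)$ and invoking the even-subgraph count on the augmented graph (equivalent to your coset-of-cycle-space argument), and that the paper handles the normalising constants implicitly via the $\propto$ relation and the lower bound $c(R)\ge 0$, whereas you spell out $Z_{worm}\ge 2^{-|V|}Z_{RC}(2p,2)$---which is just $Z_{worm}\ge Z_{even}$ together with the identity \eqref{eqn:equivalence}.
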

\begin{proof}
  Similarly to the proof of Theorem \ref{thm:GJ:coupling},
  it is not hard to see that 
  \begin{align*}
    \widehat{\pi}(R) \propto p^{|R|}(1-2p)^{|E\backslash R|}(N(R)+n^{-2}N'(R)),
  \end{align*}
  where $N(R)$, as before, is the number of even subgraphs of $(V,R)$, and
  $N'(R)$ is the number of subgraphs of $R$ that belong to $\Omega_2$.
  Note that for each near-even subgraph there is a penalty of $n^{-2}$ for its weight (see \eqref{eqn:worm:weight}).
  We use \eqref{eqn:even:number} to count the number of even subgraphs of $R$, which is $2^{|R|-|V|+\kappa(R)}$.
  
  Let $\Omega_R(u,v)$ be the set of near even subgraphs of $R$ with holes $u$ and $v$.
  If $u,v$ are in different connected components of $(V,R)$,
  then there is no possible such subgraph and $|\Omega_R(u,v)|=0$.
  Otherwise $u,v$ are in the same component of $(V,R)$, 
  and we can add an extra edge $(u,v)$ to $R$ to get a graph $R'$.
  Applying \eqref{eqn:even:number} to $R'$ we get that
  \begin{align*}
    N(R')= 2^{|E|+1-|V|+\kappa(R)} = N(R)+|\Omega_R(u,v)|.
  \end{align*}
  The second equality is because each even subgraph of $R'$ either uses the new edge $(u,v)$ or not.
  If it uses $(u,v)$, then it is an even subgraph of $R$.
  Otherwise it is a near even subgraph of $R$ with holes $u$ and $v$.
  Hence,
  \begin{align*}
    |\Omega_R(u,v)|=2^{|E|-|V|+\kappa(R)},
  \end{align*}
  as $N(R)=2^{|E|-|V|+\kappa(R)}$.
  
  Let $c(R)$ be the number of pairs of vertices from every component of $(V,R)$.
  That is,
  \begin{align}
    c(R):=\sum_{i=1}^{\kappa(R)}{n_i \choose 2},
    \label{eqn:c(R)}
  \end{align}
  where $n_i$ is the size of the $i$th component of $(V,R)$ with the convention that ${1\choose 2}=0$.
  Then we have that
  \begin{align*}
    N'(R)=2^{|R|-|V|+\kappa(R)}c(R),
  \end{align*}
  and
  \begin{align*}
    \widehat{\pi}(R) \propto (2p)^{|R|}(1-2p)^{|E\backslash R|}2^{\kappa(R)}\left( 1 + \frac{c(R)}{n^2}\right).
  \end{align*}
  The lemma follows by noticing that $0\le c(R)\le \frac{n(n-1)}{2}$.
\end{proof}

\section{Lifting Canonical Paths}

Let $p\le 1/2$ be the parameter of the even subgraph and the worm measure.
Let $\Gamma_{worm}$ be the collection of paths as in Theorem \ref{thm:path:worm}.
We will use Lemma \ref{lem:worm:coupling} to lift $\Gamma_{worm}$ 
to a flow $\Gamma_{RC}$ for $P_{RC}$, the single edge-flip Markov chain for the random cluster model with parameter $2p$.

We first construct a flow $\Gamma_{RC}'$ from $\Gamma_{worm}$.
Let $\gamma=\{w_0,w_1,\cdots,w_\ell\}$ be a path in $\Gamma_{worm}$ where $w_0,w_\ell\in\Omega_0$, and $\ell\le L(\Gamma_{worm})$.
We lift $\gamma$ to a flow (random path) as follows.
First we add each edge $e\not\in w_0$ with probability $p'=\frac{p}{1-p}$ independently as in Lemma \ref{lem:worm:coupling}, to obtain the starting state~$Z_0$ of the path. 
In other words, letting
\begin{align*}
  \delta(w,z):=(p')^{z\backslash w}(1-p')^{E\backslash z},
\end{align*}
for subsets of edges $w\subseteq z\subseteq E$,
we draw a superset $Z_0$ of $w_0$ such that $\Pr(Z_0=z)=\delta(w_0,z)$ for any $z\supseteq w_0$.
Note that
\begin{align*}
  \pi_{RC}(z)=\sum_{w\subseteq z, w\in\Omega_0}\pi_{even}(w)\delta(w,z)
\end{align*}
by Theorem \ref{thm:GJ:coupling}, and
\begin{align*}
  \widehat{\pi}(z)=\sum_{w\subseteq z, w\in\Omega_{worm}}\pi_{worm}(w)\delta(w,z)
\end{align*}
by definition.

We construct $Z_1,\cdots,Z_\ell$ inductively.
Given $Z_{k-1}$ for $1\le k\le \ell$, we construct $Z_k$ by mimicking the transition from $w_{k-1}$ to $w_k$ 
while ensuring that 
\begin{align*}
  \Pr_\gamma(Z_k=z)=\delta(w_k,z),
\end{align*}
for any $z\supseteq w_k$ at the same time.
Here the subscript $\gamma$ emphasises that probabilities are with respect to a
fixed path~$\gamma$.
By induction hypothesis, $\Pr_\gamma(Z_{k-1}=z)=\delta(w_{k-1},z)$ for any $z\supseteq w_{k-1}$.
For $Z_k$, there are two cases:
\begin{itemize}
  \item If $w_k=w_{k-1}\cup\{e\}$ for some edge $e\not\in w_{k-1}$, 
    then let $Z_k=Z_{k-1}\cup\{e\}$.
    We have that 
    \begin{align*}
      \Pr_\gamma(Z_{k}=z) & = \Pr_\gamma(Z_{k-1}=z) + \Pr_\gamma(Z_{k-1}=z\backslash\{e\}) \\
      & = \delta(w_{k-1},z) + \delta(w_{k-1},z\backslash\{e\}) \\
      & = \delta(w_k,z)p' + \delta(w_k,z)(1-p') = \delta(w_k,z),
    \end{align*}
    for any $z\supseteq w_k$.
  \item If $w_k=w_{k-1}\backslash\{e\}$ for some edge $e\in w_{k-1}$, 
    then let $Z_k=Z_{k-1}$ with probability $p'$ and $Z_k=Z_{k-1}\backslash\{e\}$ with probability $1-p'$.
    For any $z\supseteq w_k$ such that $e\in z$,
    \begin{align*}
      \Pr_\gamma(Z_{k}=z) & = \Pr_\gamma(Z_{k-1}=z)p' = \delta(w_{k-1},z)p' = \delta(w_k,z),
    \end{align*}
    and for any $z\supseteq w_k$ such that $e\not\in z$,
    \begin{align*}
      \Pr_\gamma(Z_{k}=z) & = \Pr_\gamma(Z_{k-1}=z\cup\{e\})(1-p') = \delta(w_{k-1},z\cup\{e\})(1-p') = \delta(w_k,z).
    \end{align*}
\end{itemize}

Given $\gamma$, the random flow path $Z = \{Z_0,Z_1,\cdots,Z_\ell\}$ is constructed as above.
For a particular flow path $\zeta = \{z_0,z_1,\cdots,z_\ell\}$ lifted from some path~$\gamma$, 
assign its weight to be
\begin{align*}
  wt(\zeta)=\sum_{\gamma\in\Gamma_{worm}}wt(\gamma)\Pr_\gamma(Z=\zeta).
\end{align*}
This finishes the construction of $\Gamma_{RC}'$.

However, $\Gamma_{RC}'$ is not a valid flow for $\pi_{RC}(\cdot)$.
If we randomly draw a path from $\Gamma_{RC}'$ according to $wt(\cdot)$,
then $Z_0$ and $Z_\ell$ both are distributed according to $\pi_{RC}(\cdot)$.
However, $Z_\ell$ is correlated with $Z_0$ and is not distributed correctly conditional on $Z_0$.

We resolve this issue next by constructing $\Gamma_{RC}$.
Given $\gamma\in\Gamma_{worm}$ with length $\ell$, we construct $Z_0,\cdots,Z_\ell$ the same as in $\Gamma_{RC}'$.
To repair the distribution of $Z_\ell$, we append further transitions to re-randomize edges that are not in $w_\ell$.
More precisely, let $\{e_1,e_2,\cdots,e_{k}\}$ be the edges that are not in $w_\ell$ where $k=|E\backslash w_\ell|$.
Given $Z_{\ell+i-1}$ for $1\le i\le k$, let $Z_{\ell+i-1}'=Z_{\ell+i-1}\backslash\{e_i\}$.
Then $Z_{\ell+i}=Z_{\ell+i-1}'$ with probability $1-p'$ and $Z_{\ell+i}=Z_{\ell+i-1}'\cup\{e_i\}$ with probability $p'$.
As in $\Gamma_{RC}'$, for a particular flow path $\zeta = \{z_0,z_1,\cdots,z_{\ell+k}\}$ lifted from some~$\gamma$, 
its weight is defined to be
\begin{align*}
  wt(\zeta)=\sum_{\gamma\in\Gamma_{worm}}wt(\gamma)\textnormal{Pr}_{\gamma}(Z=\zeta).
\end{align*}
This finishes the construction of $\Gamma_{RC}$.
The longest path in $\Gamma_{RC}$ has length at most $L(\Gamma_{worm})+m$, that is, $L(\Gamma_{RC})\le L(\Gamma_{worm})+m\le 2m$.

Fix a path $\gamma=\{w_0,w_1,\cdots,w_\ell\}$.
For any $0\le i\le \ell$ and $z\supseteq w_i$, we have $\textnormal{Pr}_{\gamma}(Z_i=z)=\delta(w_i,z)$,
because of the construction of $\Gamma_{RC}'$.
Moreover, for any $1\le i\le |E\backslash w_\ell|$ and $z\supseteq w_\ell$, we have 
$\textnormal{Pr}_{\gamma}(Z_{\ell+i}=z)=\delta(w_\ell,z)$.
This can be shown by inductively going through the construction above.
The re-randomization does not change the marginal distribution but removes the correlation between $Z_0$ 
and $Z_{\ell'}$, where $\ell'=\ell+|E\backslash w_\ell|$ (conditional on $\gamma$).

The flow $\Gamma_{RC}$ is valid for $\pi_{RC}(\cdot)$.
We verify \eqref{eqn:flow:requirement} as follows:
\begin{align*}
  \sum_{\zeta \textnormal{ is from $x$ to $y$}}wt(\zeta) & = \sum_{\substack{w\subseteq x,\ w'\subseteq y\\w,w'\in\Omega_{0}}} \,\,
  \sum_{\gamma \textnormal{ is from $w$ to $w'$}} wt(\gamma) \textnormal{Pr}_{\gamma}(Z_0=x,Z_{\ell'}=y)\\
  & = \sum_{\substack{w\subseteq x,\ w'\subseteq y\\w,w'\in\Omega_{0}}} \,\,
  \sum_{\gamma \textnormal{ is from $w$ to $w'$}} wt(\gamma) \textnormal{Pr}_{\gamma}(Z_0=x)\textnormal{Pr}_{\gamma}(Z_{\ell'}=y)\\
  & = \sum_{\substack{w\subseteq x,\ w'\subseteq y\\w,w'\in\Omega_{0}}} \,\,
  \sum_{\gamma \textnormal{ is from $w$ to $w'$}} wt(\gamma) \delta(w,x)\delta(w',y)\\ 
  & = \sum_{\substack{w\subseteq x,\ w'\subseteq y\\w,w'\in\Omega_{0}}} 
  \pi_{even}(w)\pi_{even}(w') \delta(w,x)\delta(w',y)\\
  & = \left( \sum_{w \subseteq x,\ w \in\Omega_{0}} \pi_{even}(w) \delta(w,x)  \right)
      \left( \sum_{w'\subseteq y,\ w'\in\Omega_{0}} \pi_{even}(w')\delta(w',y) \right)\\
  & = \pi_{RC}(x)\pi_{RC}(y),
\end{align*}
where in the last step we use Theorem \ref{thm:GJ:coupling}.

\begin{lemma}
  Let $2p\le 1$ be the parameter for the random cluaster model.
  \begin{enumerate}
    \item For a transition $(z,z')$ where $z'=z\cup\{e\}$ for some $e\not\in z$, 
      \begin{align*}
        \sum_{\zeta\in\Gamma_{RC},\ \zeta\ni(z,z')}wt(\zeta) \le \frac{p}{1-p}\cdot 2n^4\pi_{RC}(z).
      \end{align*}
    \item For a transition $(z,z')$ where $z'=z\backslash\{e\}$ for some $e\in z$, 
      \begin{align*}
        \sum_{\zeta\in\Gamma_{RC},\ \zeta\ni(z,z')}wt(\zeta) \le \frac{1-2p}{1-p}\cdot 2n^4\pi_{RC}(z).
      \end{align*}      
    \item For a transition $(z,z)$, 
      \begin{align*}
        \sum_{\zeta\in\Gamma_{RC},\ \zeta\ni(z,z)}wt(\zeta) \le 2m n^4\pi_{RC}(z).
      \end{align*}
  \end{enumerate}
  \label{lem:congestion:gammahat}
\end{lemma}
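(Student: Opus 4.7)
The plan is to decompose $\sum_{\zeta \ni (z,z')} wt(\zeta)$ according to how the transition $(z,z')$ arises inside a lifted path $\zeta$, and then apply Theorem \ref{thm:path:worm} together with the coupling bound in Lemma \ref{lem:worm:coupling}. Each transition of a lifted path is of one of two types: it either mimics a worm transition $(w_{k-1}, w_k)$ at some step $k \le \ell$, or it is one of the $|E \setminus w_\ell|$ re-randomization moves. The analysis rests on the invariants $\Pr_\gamma(Z_k = z) = \delta(w_k, z)$ (for $k \le \ell$) and $\Pr_\gamma(Z_{\ell+j} = z) = \delta(w_\ell, z)$, established in the construction of $\Gamma_{RC}$, together with the two marginalization identities $\pi_{RC}(z) = \sum_{w \in \Omega_0,\, w \subseteq z} \pi_{even}(w)\, \delta(w, z)$ and $\widehat{\pi}(z) = \sum_{w \in \Omega_{worm},\, w \subseteq z} \pi_{worm}(w)\, \delta(w, z)$.

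For case 1 with $z' = z \cup \{e\}$ (requiring $e \notin z$), the mimicry contribution is $\sum_{w \subseteq z}\sum_{\gamma \ni (w, w \cup \{e\})} wt(\gamma)\, \delta(w, z)$; applying the additional special bound of Theorem \ref{thm:path:worm} together with Lemma \ref{lem:worm:coupling} gives at most $\frac{3}{2} n^4 \frac{p}{1-p}\, \pi_{RC}(z)$. The re-randomization contribution comes from the unique step with $e_j = e$: it adds $e$ with probability $p' = p/(1-p)$, and after summing over worm endpoints $w_\ell \in \Omega_0$ via the first marginalization identity, this is exactly $p'\, \pi_{RC}(z)$. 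Adding yields $\le 2 n^4 \frac{p}{1-p}\, \pi_{RC}(z)$ for $n \ge 2$. Case 2 is symmetric: the forward worm move is $(w, w \setminus \{e\})$, for which I use the general bound $n^4 \pi_{worm}(w)$ from Theorem \ref{thm:path:worm}, while the re-randomization step retains the edge with probability $1 - p' = \frac{1-2p}{1-p}$.

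The main obstacle is case 3 (self-loops at $z$), which has three sources. (i) A worm transition $(w, w \cup \{e\})$ whose lift becomes a self-loop because $e$ already lies in $Z_{k-1}$, one contribution for each $e \in z \setminus w$. (ii) A worm transition $(w, w \setminus \{e\})$ where the lift chooses to retain $e$, with probability $p'$, one contribution for each $e \in w$. (iii) A re-randomization step that does not alter $Z$. The key observation is that applying the special bound to (i) and the general bound to (ii) leaves a common prefactor $\frac{n^4 p}{1-p}$ (since $p' = p/(1-p)$), and then the counts $|z \setminus w| + |w| = |z| \le m$ collapse (i) and (ii) into $\frac{n^4 p m}{1-p}\, \widehat{\pi}(z) \le \frac{3}{2} m n^4\, \pi_{RC}(z)$. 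For (iii), each of the $|E \setminus w_\ell| \le m$ re-randomization steps contributes a self-loop at $z$ with total probability at most $\delta(w_\ell, z)$, so after marginalization it is at most $m\, \pi_{RC}(z)$. Summing the three contributions and using $\frac{3}{2} n^4 + 1 \le 2 n^4$ for $n \ge 2$ yields the stated bound; the degenerate cases $n \le 1$ are trivial.
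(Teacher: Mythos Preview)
Your argument mirrors the paper's: split the traffic through $(z,z')$ into worm-mimicking steps and trailing re-randomization steps, bound the former via Theorem~\ref{thm:path:worm} and Lemma~\ref{lem:worm:coupling} using $\sum_{w\subseteq z}\pi_{worm}(w)\,\delta(w,z)=\widehat\pi(z)$, and the latter via $\sum_{w\in\Omega_0,\,w\subseteq z}\pi_{even}(w)\,\delta(w,z)=\pi_{RC}(z)$. Two small points: in case~2 you write that re-randomization ``retains'' the edge with probability $1-p'$ when you mean it \emph{removes} it, and you should make explicit that the worm-mimic term in case~2 also carries the factor $1-p'$ (from the lift rule for deletions, since $Z_k=Z_{k-1}\setminus\{e\}$ only with probability $1-p'$); without it the claimed bound would fail as $p\to 1/2$. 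Your case~3 is marginally sharper than the paper's---you exploit the special bound in~(i) and the factor $p'$ in~(ii) to pull out a common $p/(1-p)\le 1$, whereas the paper simply bounds the conditional self-loop probability by~$1$---but the final inequality is identical.
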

\begin{proof}
  Fix $\gamma$, let $Z$ be a random path lifted from $\gamma$ 
  and $\ell$ be the length of $\gamma$.  Thus the path is $\gamma=(w_1,\ldots,w_\ell)$
  and, in particular, the final state of the path is~$w_\ell$.
  For a state $w\in\gamma$, let $i(\gamma,w)$ be index of $w$ in $\gamma$ and $k(w,e)$ be the index of $e$ in $|E\backslash w|$.
  Any $w$ only appears once in $\gamma\in\Gamma_{worm}$ and hence $i(\gamma,w)$ is well defined.

  We want to bound the traffic in $\Gamma_{RC}$ that goes through $(z,z')$.  
  Depending on $z'$, we have three cases.
  \begin{enumerate}
    \item First assume that $z'=z\cup\{e\}$ where $e\not\in z$.
      The traffic may be from $\Gamma_{RC}'$ transitions or from the part we append at the end of each $\Gamma_{RC}'$ path.
      Hence we have the following bound:
      \begin{align*}
        \sum_{\zeta\in\Gamma_{RC},\ \zeta\ni(z,z')}wt(\zeta)&=
        \sum_{w\subseteq z}\Bigg(\sum_{\gamma\ni(w,w\cup\{e\})}wt(\gamma)\textnormal{Pr}_{\gamma}\left(Z_{i(\gamma,w)} = z, Z_{i(\gamma,w)+1} = z'\right)\\
        &\hspace{1.5cm} + \sum_{\gamma=(w_1,\ldots,w_\ell),\ w_\ell=w}wt(\gamma)\textnormal{Pr}_{\gamma}\left(Z_{\ell+k(w,e)-1} = z, Z_{\ell+k(w,e)} = z'\right)\Bigg)\\
        &= \sum_{w\subseteq z}\Bigg(\sum_{\gamma\ni(w,w\cup\{e\})}wt(\gamma)\textnormal{Pr}_{\gamma}\left(Z_{i(\gamma,w)}= z\right) \\
        &\hspace{1.5cm} + \sum_{\gamma,\ w_\ell=w}wt(\gamma)\textnormal{Pr}_{\gamma}\left(Z_{\ell+k(w,e)-1} = z\right)p'\Bigg) \\
        &= \sum_{w\subseteq z}\delta(w,z)\left(\sum_{\gamma\ni(w,w\cup\{e\})}wt(\gamma)+ \sum_{\gamma,\ w_\ell=w}wt(\gamma)p'\right),
      \end{align*}
      where $p'=\frac{p}{1-p}$.
      Hence by Theorem \ref{thm:path:worm},
      \begin{align*}
        \sum_{\zeta\in\Gamma_{RC},\ \zeta\ni(z,z')}wt(\zeta) 
        & = \sum_{w\subseteq z}\delta(w,z)\left(\sum_{\gamma\ni(w,w\cup\{e\})}wt(\gamma)+ \sum_{\gamma,\ w_\ell=w}wt(\gamma)p'\right)\\
        & \le \sum_{w\subseteq z}\delta(w,z)\left(n^4\pi_{worm}(w)\frac{p}{1-p}+ \pi_{even}(w)p'\right) \\        
        & = p'n^4\sum_{w\subseteq z}\delta(w,z)\pi_{worm}(w) + p'\sum_{w\subseteq z, w\in\Omega_0}\delta(w,z)\pi_{even}(w)\\
        & = p'n^4\widehat{\pi}(z) + p'\pi_{RC}(z)\\
        & \le 2 p' n^4 \pi_{RC}(z),
      \end{align*}
      where we use Lemma \ref{lem:worm:coupling} in the last line.
      Also note that $\pi_{even}(w)=0$ if $w\not\in \Omega_0$.
    
\item Next assume that $z'=z\backslash\{e\}$ where $e\in z$.
      Similar to the previous case, we have that
      \begin{align*}
        \sum_{\zeta\in\Gamma_{RC},\ \zeta\ni(z,z')}wt(\zeta)&=
        \sum_{w\subseteq z,\, w\ni e}\,\,\sum_{\gamma\ni(w,w\backslash\{e\})}wt(\gamma)\textnormal{Pr}_{\gamma}\left(Z_{i(\gamma,w)} = z, Z_{i(\gamma,w)+1} = z'\right)\\
        &\hspace{1.5cm} + \sum_{w\subseteq z,\, w\not\ni e}\,\,\sum_{\gamma,\ w_\ell=w}wt(\gamma)\textnormal{Pr}_{\gamma}\left(Z_{\ell+k(w,e)-1} = z, Z_{\ell+k(w,e)} = z'\right)\\
        &= \sum_{w\subseteq z,\, w\ni e}\,\,\sum_{\gamma\ni(w,w\backslash\{e\})}wt(\gamma)\textnormal{Pr}_{\gamma}\left(Z_{i(\gamma,w)}= z\right)(1-p')\\
        &\hspace{1.5cm}+ \sum_{w\subseteq z,\, w\not\ni e}\,\,\sum_{\gamma,\ w_\ell=w}wt(\gamma)\textnormal{Pr}_{\gamma}\left(Z_{\ell+k(w,e)-1} = z\right)(1-p') \\
        = &\sum_{w\subseteq z,\, w\ni e}(1-p')\delta(w,z)\,\,\sum_{\gamma\ni(w,w\backslash\{e\})}wt(\gamma)\\
        &\hspace{1.5cm}+ \sum_{w\subseteq z,\, w\not\ni e}(1-p')\delta(w,z)\,\,\sum_{\gamma,\ w_\ell=w}wt(\gamma).
      \end{align*}
      Again we use Theorem \ref{thm:path:worm} and Lemma \ref{lem:worm:coupling}:
      \begin{align*}
        \sum_{\zeta\in\Gamma_{RC},\ \zeta\ni(z,z')}wt(\zeta) 
        & \le \sum_{w\subseteq z}\delta(w,z)(1-p')\left(n^4\pi_{worm}(w)+ \pi_{even}(w)\right) \\
        & \le (1-p')n^4\sum_{w\subseteq z}\delta(w,z)\pi_{worm}(w) + (1-p')n^4\sum_{w\subseteq z, w\in\Omega_0}\delta(w,z)\pi_{even}(w)\\
        & = (1-p')n^4\widehat{\pi}(z)+(1-p')\pi_{RC}(z)\\
        & \le 2 (1-p') n^4 \pi_{RC}(z).
      \end{align*}
    \item At last we handle the case that $z=z'$.
      Then we have the following bound
      \begin{align*}
        \sum_{\zeta\in\Gamma_{RC},\ \zeta\ni(z,z)}wt(\zeta)=
        & \sum_{w\subseteq z}\Bigg(\sum_{\gamma\ni w}wt(\gamma)\textnormal{Pr}_{\gamma}\left(Z_{i(\gamma,w)}= z, Z_{i(\gamma,w)+1}=z\right)\\
        & \hspace{1cm} + \sum_{\gamma,\ w_\ell=w}wt(\gamma)\sum_{i=1}^{|E\backslash w|}
        \textnormal{Pr}_{\gamma}\left(Z_{\ell(\gamma)+i-1}= z, Z_{\ell(\gamma)+i}=z\right)\Bigg)\\
        \le & \sum_{w\subseteq z}\left(\sum_{\gamma\ni w}wt(\gamma)\textnormal{Pr}_{\gamma}\left(Z_{i(\gamma,w)}= z\right)
        + \sum_{\gamma,\ w_\ell=w}wt(\gamma)\delta(w,z) |E\backslash w| \right) \\
        = & \sum_{w\subseteq z}\delta(w,z)\left(\sum_{e\in z}\sum_{\gamma\ni(w,w\oplus\{e\})}wt(\gamma)+ |E\backslash w|\sum_{\gamma,\ w_\ell=w}wt(\gamma)\right).\\
      \end{align*}
      By Theorem \ref{thm:path:worm} and Lemma \ref{lem:worm:coupling},
      \begin{align*}
        \sum_{\zeta\in\Gamma_{RC},\ \zeta\ni(z,z)}wt(\zeta) 
        & \le m \sum_{w\subseteq z}\delta(w,z)\left(n^4\pi_{worm}(w)+ \pi_{even}(w)\right)\\
        & = 2m n^4\pi_{RC}(z). \qedhere
      \end{align*} 
  \end{enumerate}
\end{proof}


\begin{lemma}
  $\rho(\Gamma_{RC})\le 8 m^2n^4$.
  \label{lem:remove-hat}
\end{lemma}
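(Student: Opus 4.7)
My plan is to plug the three per-transition estimates from Lemma~\ref{lem:congestion:gammahat} into the flow-congestion formula~\eqref{eqn:congestion:flow}. The ingredients I will need are $L(\Gamma_{RC})\le 2m$ (established in the paragraph preceding Lemma~\ref{lem:congestion:gammahat}), the Metropolis identity $\pi_{RC}(z)P_{RC}(z,z')=\frac{1}{2m}\min\{\pi_{RC}(z),\pi_{RC}(z')\}$ that follows directly from~\eqref{eqn:SB:RC}, and the laziness estimate $P_{RC}(z,z)\ge\frac{1}{2}$.

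The self-loop contribution is immediate: $\frac{L\,\sum wt}{\pi_{RC}(z)P_{RC}(z,z)}\le 2m\cdot 2\cdot 2mn^4=8m^2n^4$ by part~(3) of Lemma~\ref{lem:congestion:gammahat}. For the two edge-flip transitions, I will split on whether the move is uphill ($\pi_{RC}(z)\le\pi_{RC}(z')$) or downhill. In the uphill case, $\min\{\pi_{RC}(z),\pi_{RC}(z')\}=\pi_{RC}(z)$, so the $\pi_{RC}(z)$ factor from Lemma~\ref{lem:congestion:gammahat} cancels against the denominator, and the leftover prefactor from the lemma is $p/(1-p)$ (addition) or $(1-2p)/(1-p)$ (deletion), both bounded by $1$ when $2p\le 1$. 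In the downhill case, I use the explicit formula $\pi_{RC}(z')/\pi_{RC}(z)=\frac{2p}{1-2p}\cdot 2^{\kappa(z')-\kappa(z)}$ with $\kappa(z')-\kappa(z)\in\{-1,0\}$ (addition) or $\{0,1\}$ (deletion) to bound the extra factor $\pi_{RC}(z)/\pi_{RC}(z')$ by $(1-2p)/p$ or $2p/(1-2p)$ respectively; these then multiply cleanly against the prefactor from Lemma~\ref{lem:congestion:gammahat} to give $(1-2p)/(1-p)$ or $2p/(1-p)$, each bounded by an absolute constant under $p\le 1/2$.

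The main obstacle I anticipate is this downhill bookkeeping: six sub-cases to check (three transition types crossed with uphill versus downhill), and in each one must verify that the prefactor from the lemma and the Metropolis ratio combine to give at most a constant. The key observation that makes this work is that those prefactors have been engineered in Lemma~\ref{lem:congestion:gammahat} precisely to mirror the worst-case single-edge-flip change in $\pi_{RC}$, so the cancellations are essentially exact. Taking the maximum of the resulting bounds over the finitely many cases yields the claimed estimate $\rho(\Gamma_{RC})\le 8m^2n^4$.
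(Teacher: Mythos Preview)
Your approach is essentially the paper's: feed the three estimates of Lemma~\ref{lem:congestion:gammahat} into \eqref{eqn:congestion:flow}, use $L(\Gamma_{RC})\le 2m$ and laziness $P_{RC}(z,z)\ge\tfrac12$, and for the two edge-flip types control $\pi_{RC}(z)P_{RC}(z,z')$ via the Metropolis identity together with the worst-case single-edge ratio $\pi_{RC}(z')/\pi_{RC}(z)$ (the paper does this by directly lower-bounding $P_{RC}(z,z')$, which is exactly your ``downhill'' case).  One cosmetic caveat: in the deletion-downhill subcase your residual factor is $\tfrac{2p}{1-p}$, which equals $2$ at $p=\tfrac12$, so strictly the arithmetic gives $16m^2n^4$ there rather than $8m^2n^4$; the paper's own computation has the same factor-of-two looseness, so this discrepancy is immaterial to the polynomial bound.
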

\begin{proof}  
  For any transition $(z,z')$ where $z'=z\cup\{e\}$ for some $e\not\in z$, 
  \begin{align*}
    \frac{L(\Gamma_{RC})}{\pi_{RC}(z)P_{RC}(z,z')}\sum_{\zeta\in\Gamma_{RC},\ \zeta\ni(z,z')}wt(\zeta)
    & \le \frac{L(\Gamma_{RC})}{\pi_{RC}(z)P_{RC}(z,z')}\cdot\frac{p}{1-p}\cdot 2n^4\pi_{RC}(z)\\
    & \le 2 m n^4\cdot\frac{p}{1-p}\cdot\frac{2m}{\min\{1,\frac{2p}{2(1-2p)}\}}\\
    & \le 4 m^2n^4,
  \end{align*}
  where we use Lemma \ref{lem:congestion:gammahat} in the first line and $p\le 1/2$ in the last.

  Similarly, for a transition $(z,z')$ where $z'=z\backslash\{e\}$ for some $e\in z$, 
  \begin{align*}
    \frac{L(\Gamma_{RC})}{\pi_{RC}(z)P_{RC}(z,z')}\sum_{\zeta\in\Gamma_{RC},\ \zeta\ni(z,z')}wt_{RC}(\zeta)
    & \le \frac{L(\Gamma_{RC})}{\pi_{RC}(z)P_{RC}(z,z')}\cdot\frac{1-2p}{1-p}\cdot 2 n^4\pi_{RC}(z)\\
    & \le 2 m n^4\cdot\frac{1-2p}{1-p}\cdot\frac{2m}{\min\{1,\frac{1-2p}{2p}\}}\\
    & \le 8 m^2n^4,
  \end{align*}
  where we use Lemma \ref{lem:congestion:gammahat} in the first line and $p\le 1/2$ in the last.

  For any transition $(z,z')$ where $z'= z$, since the chain is lazy, $P_{RC}(z,z')\ge 1/2$ and
  \begin{align*}
    \frac{L(\Gamma_{RC})}{\pi_{RC}(z)P_{RC}(z,z')}\sum_{\zeta\in\Gamma_{RC},\ \zeta\ni(z,z')}wt_{RC}(\zeta)
    & \le \frac{L(\Gamma_{RC})}{\pi_{RC}(z)P_{RC}(z,z')} \cdot 2 m n^4 \pi_{RC}(z)\\
    & \le 4 m^2 n^4,
  \end{align*}
  where we use Lemma \ref{lem:congestion:gammahat} in the first line.
\end{proof}

Combining Theorem \ref{thm:mixingtime:congestion} and Lemma \ref{lem:remove-hat} gives us desired mixing time bound for $P_{RC}$.

\begin{theorem}
  For the random cluster model with parameters $0<p<1$ and $q=2$,
  \begin{align*}
    \tau_{\epsilon}(P_{RC})\le 8 n^{4} m^2 (m\ln (1-p)^{-1}+\ln\epsilon^{-1}).
  \end{align*}
  \label{thm:RC:fast}
\end{theorem}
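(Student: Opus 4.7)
The plan is to combine the congestion bound from Lemma \ref{lem:remove-hat} with the generic mixing-time estimate in Theorem \ref{thm:mixingtime:congestion}. The only substantive work is (i) reconciling the parameterisation, since the analysis in Sections \ref{sec:evensubgraph} and the subsequent lifting is carried out in terms of the even-subgraph parameter $p \le 1/2$, whereas the statement of Theorem \ref{thm:RC:fast} uses $p$ for the random-cluster parameter, and (ii) bounding $\pi_{RC}(x_0)^{-1}$ for a convenient initial state $x_0$.

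For (i), I would introduce the auxiliary parameter $p_{e} = p/2$, so that $2 p_e = p$ and $p_e \le 1/2$ throughout the range of random-cluster parameters $0 < p < 1$. Then Lemma \ref{lem:remove-hat} applied with the even parameter $p_e$ yields the congestion bound $\rho(\Gamma_{RC}) \le 8 m^2 n^4$ for the single edge-flip chain $P_{RC}$ targeting $\pi_{RC;p,2}$, with no hidden dependence on $p$.

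For (ii), I would choose $x_0 = \emptyset$. From the definition of $\pi_{RC}$ we have $\pi_{RC}(\emptyset) = (1-p)^{|E|} 2^{|V|}/Z_{RC}(p,2)$, and the identity \eqref{eqn:equivalence} with $\beta = 1/(1-p)$ gives $Z_{RC}(p,2) = \beta^{-|E|} Z_{Ising}(\beta)$. Since $Z_{Ising}(\beta) \le \beta^{|E|} \cdot 2^{|V|}$ (every configuration contributes at most $\beta^{|E|}$), we get $Z_{RC}(p,2) \le 2^{|V|}$, hence $\pi_{RC}(\emptyset) \ge (1-p)^{|E|}$ and consequently $\ln \pi_{RC}(\emptyset)^{-1} \le m \ln (1-p)^{-1}$, exactly as foreshadowed in the closing paragraph of Section \ref{sec:IsingRC}.

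Plugging both ingredients into Theorem \ref{thm:mixingtime:congestion} yields
\begin{equation*}
\tau_\epsilon(P_{RC}) \;\le\; \rho(\Gamma_{RC})\bigl(\ln \pi_{RC}(\emptyset)^{-1} + \ln \epsilon^{-1}\bigr) \;\le\; 8\, n^{4} m^{2}\bigl(m \ln (1-p)^{-1} + \ln \epsilon^{-1}\bigr),
\end{equation*}
which is exactly the claimed bound. There is no real obstacle here: essentially all the technical content was discharged in Lemma \ref{lem:remove-hat} (and, further upstream, in Theorem \ref{thm:path:worm} and Lemma \ref{lem:worm:coupling}); the only mild subtlety is to verify that the chain $P_{RC}$ is lazy, ergodic and reversible, as required by Theorem \ref{thm:mixingtime:congestion}, but laziness is built into \eqref{eqn:SB:RC}, reversibility follows from detailed balance, and ergodicity is immediate since from any state one can reach $\emptyset$ by successive edge deletions (each with positive Metropolis probability).
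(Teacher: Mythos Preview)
Your proposal is correct and follows exactly the paper's own approach: the paper simply states that the theorem follows by combining Theorem~\ref{thm:mixingtime:congestion} with Lemma~\ref{lem:remove-hat}, and the bound on $\ln\pi_{RC}(\emptyset)^{-1}$ you derive is precisely the one the paper records at the end of Section~\ref{sec:IsingRC}. Your added remarks on the parameter reconciliation and on laziness/ergodicity/reversibility are helpful elaborations of details the paper leaves implicit, but the argument is the same.
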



\bibliographystyle{plain}
\bibliography{IsingRC14}

\appendix

\section{Equivalence of the three views}
\label{sec:equivalence}
The equivalence between the Ising model and the random cluster model with $q=2$ can be found in \cite{Grimmett:book}.
It can also be seen as the following.
Instead of assigning vertices $0$ or $1$, we may assign ``equal'' or ``independent'' to edges.
Each ``equal'' edge has an weight of $\beta-1$, and ``independent'' has weight $1$.
This is the same as the Ising model, since for each edge, 
if the two end points are equal, the weight is $\beta-1+1=\beta$,
whereas if the two endpoints are not equal, the weight is $1$.
For a subset $S\subseteq E$ of edges being ``equal'', 
each component of $S$ has two possible assignments.
Therefore the weight of $S$ is $(\beta-1)^{|S|}2^{\kappa(S)}$.
After rescaling by $\beta^{|E|}$ this matches the random cluster formulation \eqref{eqn:RC} with $p=1-\frac{1}{\beta}$ and $q=2$.
This gives the first equality of \eqref{eqn:equivalence}.

The equivalence between the Ising model and even subgraphs model can be explained via a holographic transformation 
by Hadamard matrix $H=\trans{1}{1}{1}{-1}$.\footnote{For basics of holographic transformations, see e.g.\ \cite{CGW13}.}
In the Ising model, vertices have functions \textsc{Equality} on its adjacent $d$ many half-edges,
which after the transformation becomes \textsc{Even} function, defined as follows:
\begin{align*}
  \textsc{Even}(x_1,\cdots,x_d)=
  \begin{cases}
    2 & \textrm{if }\bigoplus_i x_i=0;\\
    0 & \textrm{otherwise}.\\
  \end{cases}
\end{align*}
On the edges, the function (on the two half-edges) is 
\begin{align*}
  \textsc{Ising}(x_1,x_2)=
  \begin{cases}
    \beta & \textrm{if }x_1=x_2;\\
    1 & \textrm{otherwise},\\
  \end{cases}
\end{align*}
whereas after the transformation it is
\begin{align*}
  \textsc{WEQ}(x_1,x_2)=
  \begin{cases}
    \frac{\beta+1}{2} & \textrm{if }x_1=x_2=0;\\
    \frac{\beta-1}{2} & \textrm{if }x_1=x_2=1;\\    
    0 & \textrm{otherwise},\\
  \end{cases}
\end{align*}
a weighted equality function.
Therefore, for a subset $S$ of edges (both half-edges are $1$), its weight is 
\begin{align*}
  wt(S)=
  \begin{cases}
    2^{|V|}\left( \frac{\beta-1}{2} \right)^{|S|}\left( \frac{\beta+1}{2} \right)^{|E\backslash S|}
    & \textrm{if }S\in\Omega_{even};\\
    0 & \textrm{otherwise}.
  \end{cases}
\end{align*}
The requirement of $S\in\Omega_{even}$ arises because each vertex requires even degree,
and when all degree constraints are satisfied, the vertices contribute $2^{|V|}$ in total.
We may rewrite the weight of $S\in\Omega_{even}$:
\begin{align*}
  2^{|V|}\left( \frac{\beta-1}{2} \right)^{|S|}\left( \frac{\beta+1}{2} \right)^{|E\backslash S|}
  = 2^{|V|}\beta^{|E|} \left( \frac{1}{2}\left( 1-\frac{1}{\beta} \right) \right)^{|S|}\left( \frac{1}{2}\left( 1+\frac{1}{\beta} \right) \right)^{|E\backslash S|}
\end{align*}
Hence setting $p=\frac{1}{2}\left( 1-\frac{1}{\beta} \right)$ matches \eqref{eqn:even} 
and taking out appropriate scaling factor yields the second equality of \eqref{eqn:equivalence}.

\section{Congestion of the worm process}
\label{sec:worm}

Throughout this section fix $p\le 1/2$.
Recall that $\Omega_k$ is the collection of subgraphs where $k$ many vertices have odd degrees.
Then $\Omega_0=\Omega_{even}$, and $\Omega_0\cup\Omega_2=\Omega_{worm}$.
Define
\begin{align*}
  Z_k := \sum_{S\in\Omega_k} w_p(S),
\end{align*}
where $w_p(S)=p^{|S|}(1-p)^{|E\backslash S|}$.
Then $Z_0=Z_{even}(p)$ and $Z_{worm}(p)=Z_0+n^{-2}Z_2$.

If we adopt the holographic transformation view of the even subgraphs model,
then a vertex that only allows odd degrees is equivalent to the following function:
\begin{align*}
  \textsc{Odd}(x_1,\cdots,x_d)=
  \begin{cases}
    2 & \textrm{if }\bigoplus_i x_i=1;\\
    0 & \textrm{otherwise}.\\
  \end{cases}
\end{align*}
Transforming back to the Ising model,
this vertex is still an \textsc{Equality} on all adjacent half-edges,
but with a weight of $-1$ when all half-edges are assigned $1$.
Hence for every $u,v\in V$,
\begin{align}
  Z_{u,v}:=\sum_{S\in\Omega(u,v)}w_p(S)\le Z_0,
  \label{eqn:two:holes}
\end{align}
because the left hand side can be transformed to the original Ising with $u$ and $v$ having weights~$-1$.
We can sum over all possible pairs of vertices in \eqref{eqn:two:holes}, getting $Z_2\le {n\choose 2} Z_0$.

\begin{lemma} 
  $Z_2\le{n\choose 2}Z_0$.
  \label{lem:Z4}
\end{lemma}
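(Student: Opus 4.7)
The plan is to reduce the lemma to the already-established per-pair inequality \eqref{eqn:two:holes} and then sum. Concretely, I would argue that the state space $\Omega_2$ decomposes as the disjoint union
\[
\Omega_2 = \bigsqcup_{\{u,v\} \subseteq V,\ u\neq v} \Omega(u,v),
\]
because any subgraph in $\Omega_2$ has exactly two odd-degree vertices, which uniquely determines the unordered pair $\{u,v\}$. Consequently $Z_2 = \sum_{\{u,v\}} Z_{u,v}$, so it suffices to show $Z_{u,v} \le Z_0$ for each such pair and then multiply by $\binom{n}{2}$.

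For the per-pair bound $Z_{u,v} \le Z_0$, I would use exactly the holographic transformation argument sketched just before the lemma. Applying the Hadamard transform $H = \trans{1}{1}{1}{-1}$ at every vertex converts the even-subgraph model back into an Ising-type spin model: the \textsc{Even} constraint at a normal vertex becomes the usual \textsc{Equality}, while the \textsc{Odd} constraint at vertices $u$ and $v$ (imposed by requiring exactly these two vertices to have odd degree) becomes an \textsc{Equality} weighted by $-1$ at the all-ones assignment. Thus $Z_{u,v}$, after transformation, is a signed sum of positive Ising weights in which the contribution of each spin configuration is multiplied by $(-1)^{\sigma(u)+\sigma(v)}$. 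By the triangle inequality this signed sum is bounded in absolute value by the corresponding unsigned Ising sum, which transforms back to $Z_0$. Since $Z_{u,v}$ is manifestly nonnegative (it is a sum of positive weights $w_p(S)$), we conclude $Z_{u,v} \le Z_0$.

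Combining the two ingredients,
\[
Z_2 = \sum_{\{u,v\} \subseteq V,\, u\neq v} Z_{u,v} \le \binom{n}{2} Z_0,
\]
which is the claimed bound. There is no real obstacle: the holographic computation has already been carried out earlier in the paper (Appendix~\ref{sec:equivalence}), and the only additional observation needed is the disjoint-union decomposition of $\Omega_2$, which is immediate from the definition. One mild care point is to verify that $\Omega(u,v)$ for distinct pairs are genuinely disjoint so that summing gives equality rather than an inequality in the other direction, but this is obvious from the degree-parity description.
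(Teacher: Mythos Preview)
Your proposal is correct and follows exactly the paper's approach: the paper establishes the per-pair bound $Z_{u,v}\le Z_0$ via the holographic transformation (equation~\eqref{eqn:two:holes}) and then sums over the $\binom{n}{2}$ pairs, noting that the $\Omega(u,v)$ partition~$\Omega_2$. Your write-up is somewhat more explicit about the disjointness of the decomposition and the triangle-inequality step in the signed Ising sum, but the argument is the same.
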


Lemma \ref{lem:Z4} implies that $Z_{worm}=Z_0+n^{-2}Z_2\le Z_0+n^{-2}{n\choose 2}Z_0\le 3Z_0/2$.

Now we are ready to prove Theorem \ref{thm:path:worm}.

\begin{proof}[Proof of Theorem \ref{thm:path:worm}]
  Let $I$ and $F$ be two configurations in $\Omega_0$, denoting the initial and final states.
  Then $I\oplus F\in\Omega_0$.
  The canonical path from $I$ to $F$ will be identical to those in \cite{JSising}.
  Fix an arbitrary ordering of all cycles in $G$.
  For each cycle we designate a starting vertex and a direction around the cycle.
  Hence each cycle is an ordered tuple of edges.
  Since $I\oplus F$ is an even subgraph, we can cover $I\oplus F$ by a collection of edge-disjoint cycles.
  Let $\{C_1,\cdots,C_r\}$ be the first such in our ordering.
  Let $e_1,\cdots,e_k$ be the edges of $\{C_1,\cdots,C_r\}$ taken in order (first order the 
  edges according to the cycle they occur in, and then by their position within the cycle,
  counting from the start vertex).
  The canonical path $\gamma$ from $I$ to $F$ is defined to be $Z_0=I$, $Z_{i}=Z_{i-1}\oplus e_i$, and $Z_k=F$.
  Intuitively the canonical path unwinds $C_i$ one by one from $i=1$ to $i=r$.
  Clearly $L=L(\Gamma_{worm})\le m$ as it can use every edge at most once.

  This path is always in $\Omega_0\cup\Omega_2$ because 
  if we start to unwind a cycle, then the current state is an even subgraph.
  If we are unwinding a path, then we always flip an edge that is adjacent to an odd degree vertex.

  For any transition $(w,w')$ where $w'=w\oplus e$ for some edge $e\in E$,
  we use a combinatorial encoding as in \cite{JSising} for all paths passing through $(w,w')$.
  For any two configurations $I,F\in\Omega_0$,
  let $\varphi(I,F)=I\oplus F\oplus w$.
  We claim that $\varphi:\Omega_0^2\rightarrow\Omega_0\cup\Omega_2$ is an injection.
  This is because given $(w,w')$ and $U=\varphi(I,F)$, we can recover the unique $(I,F)$.
  First, since $w\oplus U=I\oplus F$, all edges \emph{not} in $w\oplus U$ have the same state in both $I$ and $F$,
  and their states are the same as those in $w$.
  Then for edges in $w\oplus U$, due to the construction of the canonical path,
  there is a unique ordering among those edges, including $e=w\oplus w'$.
  For any edge before $e$, its status in $w$ has been changed to that in $F$, and its status in $U$ is still the same as that in $I$.
  For any edge after $e$ (including $e$ itself), its status in $w$ is still the same as that in $I$, and in $U$ is the same as in $F$.

  Recall that $w_p(S)=p^{|S|}(1-p)^{|E\backslash S|}$ for any subset of edges $S\subset E$.
  Since $I\oplus F=w\oplus U$ and $I\cap F=w\cap U$, we have that
  \begin{align*}
    w_p(I)w_p(F) = w_p(w)w_p(U).
  \end{align*}
  Therefore,
  \begin{align*}
    \sum_{\gamma\ni(w,w')}wt(\gamma) & = \sum_{\substack{I,F\in\Omega_0^2\\ \gamma_{IF}\ni(w,w')}}\pi_{even}(I)\pi_{even}(F)
    = \sum_{\substack{I,F\in\Omega_{0}^2\\ \gamma_{IF}\ni(w,w')}}\frac{w_p(I)w_p(F)}{Z_0^2}\\
    & = \sum_{\substack{I,F\in\Omega_{0}^2\\ \gamma_{IF}\ni(w,w')}}\frac{w_p(w)w_p(\varphi(I,F))}{Z_0^2}\\
    & \le w_p(w)\sum_{U\in \Omega_0\cup\Omega_2}\frac{w_p(U)}{Z_0^2}\\
    & = \frac{Z_0+Z_2}{Z_0^2} \cdot w_p(w).
  \end{align*}
  By the definition of $\pi_{worm}$ \eqref{eqn:worm:weight} and \eqref{eqn:worm:measure}, 
  $\pi_{worm}(w)=\frac{w_{worm}(w)}{Z_{worm}} \ge \frac{w_p(w)}{n^2Z_{worm}}$.
  This implies that 
  \begin{align*}
    \sum_{\gamma\ni(w,w')}wt(\gamma)
    & \le \frac{Z_0+Z_2}{Z_0}\cdot\frac{Z_{worm}}{Z_0}\cdot n^2 \pi_{worm}(w)\\
    & \le \left( 1+{n\choose 2} \right)\left( 1+{n\choose 2}n^{-2} \right)\tag{by Lemma \ref{lem:Z4}}n^2 \pi_{worm}(w)\\
    & \le n^4 \pi_{worm}(w).
  \end{align*}      
  
  For the last claim of the theorem,
  let $w'=w\cup\{e\}$ for some $e\not\in w$.
  We can do the same combinatorial encoding for $w'$.
  That is, let $U'=\varphi'(I,F)=I\oplus F\oplus w'$.
  It is easy to verify as above that $\varphi'$ is an injection.
  Then as above,
  \begin{align*}
    \sum_{\gamma\ni(w,w')}wt(\gamma)
    & \le \frac{Z_0+Z_2}{Z_0^2} \cdot w_p(w')\\
    & = \frac{Z_0+Z_2}{Z_0^2} \cdot w_p(w) \cdot \frac{p}{1-p}\\
    & \le n^4 \pi_{worm}(w)\frac{p}{1-p}.\qedhere
  \end{align*}  
\end{proof}
\end{document}